\documentclass[11pt]{article}
\usepackage{times}
\usepackage{url}
\usepackage{graphicx}
\usepackage{subfigure} 
\usepackage{authblk}
\setlength{\affilsep}{2mm}
\addtolength{\parskip}{0.5ex}
\addtolength{\textwidth}{2em}
\addtolength{\oddsidemargin}{-1em}
\addtolength{\evensidemargin}{-1em}
\sloppy

\usepackage{algorithm}
\usepackage{algorithmic}
\usepackage{amsmath, amsthm, amssymb}

%    Q-circuit version 1.06
%    Copyright (C) 2004  Steve Flammia & Bryan Eastin

%    This program is free software; you can redistribute it and/or modify
%    it under the terms of the GNU General Public License as published by
%    the Free Software Foundation; either version 2 of the License, or
%    (at your option) any later version.
%
%    This program is distributed in the hope that it will be useful,
%    but WITHOUT ANY WARRANTY; without even the implied warranty of
%    MERCHANTABILITY or FITNESS FOR A PARTICULAR PURPOSE.  See the
%    GNU General Public License for more details.
%
%    You should have received a copy of the GNU General Public License
%    along with this program; if not, write to the Free Software
%    Foundation, Inc., 59 Temple Place, Suite 330, Boston, MA  02111-1307  USA

\usepackage[matrix,frame,arrow]{xy}
\usepackage{amsmath}

\newcommand{\ket}[1]{\left\vert{#1}\right\rangle}

\newtheorem{theorem}{Theorem}[section]\def\TH{\begin{theo}}\def\HT{\end{theo}}
\newtheorem{prop}[theorem]{Proposition}\def\PRO{\begin{prop}}\def\ORP{\end{prop}}
\newtheorem{coro}[theorem]{Corollary}\def\COR{\begin{coro}}\def\ROC{\end{coro}}
\newtheorem{defi}[theorem]{Definition}\def\DE{\begin{defi}}\def\ED{\end{defi}}
\def\LE{\begin{lemme}}\def\EL{\end{lemme}}

\DeclareMathOperator*{\argmin}{arg\,min}

\newcommand{\pbfrac}[2]{\mbox{$\mbox{}^{#1}\!/_{#2}$}}

\renewcommand{\epsilon}{\varepsilon}

\title{An optimal quantum algorithm to approximate the mean and its application for approximating the median of a set of points over an arbitrary distance}

\author[$^\ast$]{Gilles Brassard}
\author[$^\dagger$]{Fr\'ed\'eric Dupuis}
\author[$^\ddagger$]{S\'ebastien Gambs}
\author[$^\ast$]{Alain Tapp}
\affil[$^\ast$]{{\it\normalsize D\'epartement d'informatique et de recherche op\'erationnelle, Universit\'e de Montr\'eal} \authorcr
{\vspace{-1mm}\it\normalsize C.P.\ 6128, Succursale Centre-Ville, Montr\'eal (QC), H3C\,3J7~~Canada}\vspace{1mm}}
\affil[$^\dagger$]{{\it\normalsize Institut f\"ur Theoretische Physik, ETH Z\"urich}\authorcr
{\vspace{-1mm}\it\normalsize Wolfgang-Pauli-Stra{\ss}e~27, 8093~Z\"urich, Switzerland}\vspace{1mm}}
\affil[$^\ddagger$]{{\it\normalsize IRISA, Campus de Beaulieu, Universit\'e de Rennes~1}\authorcr
{\vspace{-1mm}\it\normalsize Avenue du G\'en\'eral Leclerc, 35042~Rennes Cedex, France}}
\date{25 May 2011}

\begin{document}

\maketitle

\begin{abstract}
We describe two quantum algorithms to approximate the mean value of a black-box function. The first algorithm is novel and asymptotically optimal while the second is a variation on an earlier algorithm due to Aharonov. Both algorithms have their own strengths and caveats and may be relevant in different contexts. We~then propose a new algorithm for approximating the median of a set of points over an arbitrary distance function.

\vspace{1ex}
\noindent
\textbf{Keywords:} Quantum computing, Mean, Median, Amplitude estimation.
\end{abstract}

\section{Introduction}
\label{intro}

Let $F:\{0,\dots,N-1\} \rightarrow [0,1]$ be a function and $m=\frac{1}{N} \sum_{i=1}^N F(i)$ be its mean.
When $F$ is given as a black box (i.e.~an oracle), the complexity of computing the mean can be measured by counting the number of queries made to this black box. The first quantum algorithm to approximate the mean was given by Grover, whose output of the estimate $\tilde m$ was such that $|m-\tilde{m}| \leqslant \epsilon$ after $O( \frac{1}{\epsilon} \log\log( \frac{1}{\epsilon}))$ queries to the black box~\cite{grover}. Later, Nayak and Wu~\cite{nayakwu} have proven that to get such a precision, $\Omega(1/\epsilon)$ calls to $F$ are necessary, which still left a gap between the lower and upper bounds for this problem.
In~this paper, we close this gap by presenting an asymptotically optimal algorithm to approximate the mean.
We~also describe a second algorithm that is a variation of Aharonov's algorithm~\cite{aharonov}, which may be more suitable than the first one in some contexts.  
  
Afterwards, these two algorithms to approximate the mean are used in combination with the quantum algorithm for finding the minimum of D\"{u}rr and H{\o}yer~\cite{findingminimumGrover} to obtain a quantum algorithm for approximating the \emph{median} among a set of points with arbitrary black-box distance function between these points. The median, which is defined as the point with minimum average (or total) distance to the other points, can be thought of as the point that is the most representative of all the other points. \mbox{Note} that this is very different from the simpler problem of finding the median of a set of values, which has already been solved by Nayak and Wu~\cite{nayakwu}. 
Our median-finding algorithm
combines the amplitude estimation technique of Brassard, H{\o}yer, Mosca and Tapp~\cite{brassard:amplitudeamplification} with the minimum-finding algorithm of D\"{u}rr and H{\o}yer~\cite{findingminimumGrover}.

The outline of the paper is as follows. In Section~\ref{prelimin}, we present all the tools that we need, including Grover's algorithm, the quantum algorithm for computing the mini\-mum of a function and the amplitude estimation technique. In~Section~\ref{approximate_mean}, we describe two efficient algorithms to approximate the mean value of a function, which we use in Section~\ref{approximate_median} to develop our novel quantum algorithm for approximating the median of an ensemble of points for which distances between them are given by a black box. Finally, we conclude in Section~\ref{conclu} with open questions for future work.

\section{Preliminaries}
\label{prelimin}

In this section, we briefly review the quantum information processing notions that are relevant for understanding our algorithms.  A~detailed account of the field can be found in the book of Nielsen and Chuang~\cite{qipbook}. 

As is often the case in the analysis of quantum algorithms, 
we shall assume that the input to the algorithms is given in the form of a black box (or~``oracle'') that can be accessed in quantum superposition. 
In practice, the quantum black box will be implemented as a quantum circuit that can have classical inputs and outputs. 
We~shall count as our main resource the number of calls (also called ``evaluations'') that are required to that black box.

\pagebreak

\begin{theorem}[Search~\cite{groveralgorithm,BBHT}]
There exists a quantum algorithm that takes an arbi\-trary function $F:\{0,\ldots,N-1\} \rightarrow \{0,1\}$ as input and finds some $x$ such that $F(x)=1$ if one \mbox{exists} or outputs ``void'' otherwise.
Any such $x$ is called a~``\mbox{solution}''.
The~algorithm requires $O(\sqrt{N}\,)$ evaluations of $F$ if there are no solutions.
If~there are \mbox{$s>0$} solutions, the algorithm finds one with probability at least $\pbfrac23$
after $O(\sqrt{N/s}\,)$ expected evaluations of~$F$.
This is true even if the value of~$s$ is not known ahead of time.
\end{theorem}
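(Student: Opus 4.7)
The plan is to present the standard Grover iteration first, then bootstrap to the case of unknown~$s$ using the BBHT trick.

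\medskip

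First I would recall the geometric picture of Grover's algorithm. Starting from the uniform superposition $\ket{\psi_0} = \frac{1}{\sqrt N}\sum_{i=0}^{N-1}\ket i$, define $\ket{\mathrm{good}}$ and $\ket{\mathrm{bad}}$ as the normalised projections onto solution and non-solution indices; then $\ket{\psi_0} = \sin\theta \ket{\mathrm{good}} + \cos\theta \ket{\mathrm{bad}}$ where $\sin^2\theta = s/N$. A~single Grover iteration $G = -(I - 2\ket{\psi_0}\!\bra{\psi_0})\,O_F$ (where $O_F$ flips the sign of solutions) is a rotation by $2\theta$ in the two-dimensional real subspace spanned by $\ket{\mathrm{good}}$ and $\ket{\mathrm{bad}}$, using exactly one evaluation of~$F$. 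Hence after $k$ iterations and a measurement in the computational basis, the probability of obtaining a solution is $\sin^2((2k+1)\theta)$. Choosing $k=\lfloor \pi/(4\theta)\rfloor$ gives success probability bounded below by a constant (say $\ge 1/2$) and uses $O(1/\theta) = O(\sqrt{N/s}\,)$ queries, proving the theorem when~$s$ is known and positive.

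\medskip

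Next I would handle the case where $s$ is unknown (including $s=0$) via the exponential-guessing strategy of Boyer--Brassard--H{\o}yer--Tapp. Fix any constant $\lambda$ with $1<\lambda<4/3$. Maintain a parameter~$M$, initially~$1$. Repeatedly: pick $k$ uniformly at random in $\{0,\ldots,\lceil M\rceil-1\}$, run $k$ Grover iterations on $\ket{\psi_0}$, measure to get some index~$x$, and query $F(x)$; if $F(x)=1$ output~$x$, otherwise set $M \leftarrow \lambda M$ and repeat. Terminate with output ``void'' once $M$ exceeds a fixed constant times $\sqrt N$ (e.g.\ $9\sqrt N$). The randomised choice of~$k$ dilutes the worst-case oscillation of $\sin^2((2k+1)\theta)$: a standard calculation shows that when $M \ge 1/\sin(2\theta)$, the conditional success probability of a round is at least~$1/4$. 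Combined with the geometric growth of~$M$, the expected number of queries until~$M$ first reaches $\Theta(\sqrt{N/s}\,)$ is dominated by the last round and totals $O(\sqrt{N/s}\,)$; a Markov-style argument then bounds the overall expected cost by the same order and yields overall success probability at least~$2/3$.

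\medskip

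Finally I would handle the ``no solutions'' case: if $s=0$, every query $F(x)$ returns~$0$, so the algorithm always proceeds to enlarge~$M$ and eventually outputs ``void''. Because $M$ grows geometrically, the total number of queries until $M$ exceeds the cutoff $c\sqrt N$ is a geometric sum dominated by its last term, hence $O(\sqrt N)$. Conversely, when $s\ge 1$, I~need to verify that the cutoff is large enough so that the probability of reaching it without having found a solution is at most $1/3$; this follows by choosing the cutoff constant so that the expected number of ``good'' rounds completed before termination is large enough to drive the failure probability below $1/3$ by a geometric-series computation.

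\medskip

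\noindent\textbf{Main obstacle.} The delicate step is the randomised-$k$ lemma: showing that for $M \ge 1/\sin(2\theta)$ the average of $\sin^2((2k+1)\theta)$ over $k\in\{0,\ldots,\lceil M\rceil-1\}$ is bounded below by a constant. This requires summing a geometric series of complex exponentials and carefully bounding the resulting Dirichlet-kernel term; it is what makes the whole scheme robust to the unknown value of~$s$. Once this lemma is in hand, the bookkeeping for expected query counts and the ``void'' cutoff is routine.
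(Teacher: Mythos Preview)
Your proposal is a correct and clear outline of the standard proof from the cited references (Grover's rotation picture plus the BBHT exponential-guessing scheme for unknown~$s$). However, the paper does not actually prove this theorem: it is stated in the Preliminaries section purely as a background result, with the proof deferred entirely to the citations~\cite{groveralgorithm,BBHT}. So there is no ``paper's own proof'' to compare against; what you have written is essentially a faithful sketch of the argument found in those references, and would be appropriate if the paper had chosen to include a proof.
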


Following Grover's seminal work, 
D\"{u}rr and H{\o}yer~\cite{findingminimumGrover} have proposed a quantum algorithm that can find the minimum
of a function with a quadratic speed-up compared to the best possible classical algorithm.

\begin{theorem}[Minimum Finding~\cite{findingminimumGrover,quantumgraphproblems}] \label{thm:minimum}
There exists a quantum algorithm $\textsf{minimum}$ that takes an arbitrary function $F:\{0,\ldots,N-1\} \rightarrow Y$ as input (for an arbitrary totally ordered range~$Y$)
and returns a pair $(i,F(i))$ such that $F(i)$ is the minimum value taken by~$F$. The algorithm finds a correct answer with probability at least $\pbfrac34$ after $O(\sqrt{N}\,)$ evaluations of $F$.
\end{theorem}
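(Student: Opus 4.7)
My plan is to implement the algorithm of D\"{u}rr and H{\o}yer: maintain a ``current best'' index $y$, and repeatedly invoke the quantum search of Theorem~2.1 to seek an index $j$ with $F(j)<F(y)$, updating $y\leftarrow j$ whenever such a $j$ is returned. I~would initialize $y$ uniformly at random in $\{0,\ldots,N-1\}$ and fix a global query budget of $C\sqrt{N}$ for a constant $C$ to be chosen later, outputting the final pair $(y,F(y))$ once the budget is exhausted.

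The heart of the analysis is to bound the expected number of oracle calls required for the rank of~$y$ (in increasing order of $F$-value, with ties broken by index) to reach~$1$. When the current rank is~$r$, there are exactly $r-1$ marked indices, so Theorem~2.1 tells us that the next call to \textsf{Search} has expected cost $O(\sqrt{N/(r-1)})$ evaluations of~$F$. Provided each search returns a uniformly random solution (which one can arrange by a minor modification), the sequence of visited ranks $r_0>r_1>\cdots>r_m=1$ is the Markov chain with $r_0$ uniform on $\{1,\ldots,N\}$ and $r_{t+1}$ uniform on $\{1,\ldots,r_t-1\}$.

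A short induction on $N$ then shows that rank~$r$ is visited with probability exactly $1/r$, so by linearity of expectation the total expected query cost is
\begin{equation*}
\sum_{r=2}^{N}\frac{1}{r}\cdot O\!\left(\sqrt{\tfrac{N}{r-1}}\right)
\;=\; O(\sqrt{N})\sum_{r=2}^{N}\frac{1}{r\sqrt{r-1}}
\;=\; O(\sqrt{N}),
\end{equation*}
since $\sum_{r\ge 2} r^{-3/2}$ converges. Markov's inequality then yields that, for $C$ large enough, with probability at least $\pbfrac34$ the algorithm reaches rank~$1$ within the $C\sqrt{N}$-query budget and therefore outputs a true minimizer.

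The main technical obstacle is the probabilistic bookkeeping above: one must handle the constant per-call failure probability of \textsf{Search} (either by amplifying each call at an $O(\log N)$ overhead and re-absorbing via a sharper accounting, or by simply observing that a failed call leaves $y$ unchanged and wastes only the queries spent on that call, so the Markov bound on total work still goes through) and justify that the per-step solution can be treated as uniform over the marked set. Both points are routine but are where care is needed; the combinatorial sum displayed above is the crisp quantitative heart of the argument.
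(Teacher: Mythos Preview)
The paper does not prove this theorem at all: it is stated in the Preliminaries as a known result, with the proof delegated to the cited references~\cite{findingminimumGrover,quantumgraphproblems}. So there is no ``paper's own proof'' to compare against.

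That said, your plan is precisely the D\"urr--H{\o}yer argument and is correct. The combinatorial heart---that rank~$r$ is visited with probability exactly $1/r$---follows from the telescoping identity $\frac{1}{N}+\sum_{s=r+1}^{N}\frac{1}{s(s-1)}=\frac{1}{r}$, and the convergence of $\sum_{r\ge 2} r^{-3/2}$ then gives $O(\sqrt{N})$ expected work; Markov's inequality converts this into the claimed $\pbfrac34$ success probability within a $C\sqrt{N}$ budget. The two caveats you flag (uniformity of the returned solution over the marked set, and the constant per-call failure probability of \textsf{Search}) are exactly the points the original analysis must address, and your proposed fixes---a random relabelling of indices for the former, and charging failed calls to the total-work budget for the latter---are the standard ones. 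One small addition worth making explicit: once rank~$1$ is reached, the final call to \textsf{Search} returns ``void'' at cost $O(\sqrt{N})$ by Theorem~2.1, which is an additive term already absorbed in the bound.
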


Another extension of Grover's algorithm makes it possible to approximately \emph{count} the number of solutions to a search problem~\cite{brassard:counting}. 
It~was subsequently formulated as follows.
\begin{theorem}[Counting~\cite{brassard:amplitudeamplification}] \label{thm:counting}
There exists a quantum algorithm \textsf{count} that takes an arbitrary function $F:\{0,\ldots,N-1\} \rightarrow \{0,1\}$ as input as well as some positive integer~$t$.
If~there are $s$ values of $x$ such that $F(x)=1$,
algorithm \textsf{count}$(F, t)$ outputs an integer estimate $\tilde{s}$ for $s$ such that
\[|s-\tilde{s}| < 2\pi\frac{\sqrt{s(N-s)}}{t}+\frac{\pi^2 N}{t^2}\]
with probability at least $8/\pi^2$ after exactly $t$ evaluations of~$F$. 
In~special case \mbox{$s=0$}, \textsf{count}$(F, t)$ always outputs perfect estimate~\mbox{$\tilde{s}=0$}.
\end{theorem}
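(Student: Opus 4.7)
My plan is to invoke the quantum amplitude estimation framework. Let $A$ be a unitary that creates the uniform superposition $\ket{\psi} = \frac{1}{\sqrt{N}}\sum_{x=0}^{N-1}\ket{x}$ from $\ket{0}$ (for instance $A=H^{\otimes n}$), let $S_F$ flip the sign of every $\ket{x}$ with $F(x)=1$, and let $S_0$ flip the sign of $\ket{0}$. The Grover iterate $Q = -A\,S_0\,A^{-1}\,S_F$ preserves the two-dimensional subspace spanned by the uniform superpositions over solutions and non-solutions, where it acts as a rotation by angle $2\theta$ with $\sin^2\theta = s/N$ and $\theta\in[0,\pi/2]$. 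Consequently $Q$ has eigenvalues $e^{\pm 2i\theta}$ on this subspace, and $\ket{\psi}$ decomposes as an equal-weight superposition of the corresponding eigenvectors.

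Next, I would run quantum phase estimation on $Q$ with $\ket{\psi}$ as the target register, using exactly $t$ controlled applications of $Q$ (hence exactly $t$ queries to $F$). The standard Fourier-analytic analysis of phase estimation produces an estimate $\tilde\theta$ satisfying $|\tilde\theta - \theta| \leq \pi/t$ with probability at least $8/\pi^2$, where the constant $8/\pi^2$ is obtained by summing the probabilities of landing on either of the two multiples of $\pi/t$ that bracket $\theta$. The algorithm then outputs $\tilde s$ equal to the integer nearest to $N\sin^2\tilde\theta$.

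To derive the claimed error bound, set $\Delta\theta = \tilde\theta - \theta$ and expand
\[
\sin^2\tilde\theta - \sin^2\theta \;=\; \sin(2\theta)\sin(\Delta\theta)\cos(\Delta\theta) + \sin^2(\Delta\theta)\cos(2\theta).
\]
Applying $|\sin x|\leq|x|$ and $|\cos x|\leq 1$, together with the identity $N|\sin(2\theta)| = 2\sqrt{s(N-s)}$ and the bound $|\Delta\theta|\leq \pi/t$, multiplication by $N$ yields $|\tilde s - s| < 2\pi\sqrt{s(N-s)}/t + \pi^2 N/t^2$, as required.

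I expect the main technical challenge to be the careful accounting in the phase-estimation success probability, especially the factor-of-two argument that produces $8/\pi^2$ rather than the more elementary $4/\pi^2$ one obtains from a single nearest integer; the trigonometric error propagation is then routine. The special case $s=0$ is handled separately by noting that then $S_F = I$, so $\ket{\psi}$ is a $+1$ eigenvector of $Q$ and phase estimation returns $\tilde\theta = 0$ deterministically, giving $\tilde s = 0$ with certainty.
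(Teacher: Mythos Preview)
The paper does not actually prove this theorem: it appears in the preliminaries section as a result quoted from~\cite{brassard:amplitudeamplification}, with no accompanying proof. Your proposal is precisely the standard argument from that reference---construct the Grover iterate, observe that it rotates the two-dimensional good/bad subspace by $2\theta$ with $\sin^2\theta=s/N$, run phase estimation with $t$ controlled iterations to obtain $\tilde\theta$ within $\pi/t$ of $\theta$ with probability at least $8/\pi^2$, and then propagate the error through $N\sin^2(\cdot)$ via the identity you wrote down---and it is correct in substance. The one loose end is the final rounding of $N\sin^2\tilde\theta$ to an integer, which contributes up to $1/2$ of additional error that your last displayed inequality does not explicitly absorb; in~\cite{brassard:amplitudeamplification} this is dealt with, but it is a bookkeeping detail rather than a conceptual gap.
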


The following theorem on amplitude amplification is also adapted from~\cite{brassard:amplitudeamplification}.
Its~statement is rather more technical than that of the previous theorems.
\begin{theorem}[Amplitude estimation~\cite{brassard:amplitudeamplification}] \label{AmpEst}
There exists a quantum algorithm \textsf{amplitude\_estimation} that takes as inputs two
unitary transformations $A$ and $B$ as well as some positive integer~$t$.
If
\[A\ket{0}=\alpha\ket{ \psi_0}+\beta \ket{\psi_1}\]
(where $\ket{\psi_0}$ and $\ket{\psi_1}$ are orthogonal states and $\ket{0}$ is of arbitrary dimension) and
\[B\ket{\psi_0}\ket{0}=\ket{\psi_0}\ket{0} \text{~and~}
B\ket{\psi_1}\ket{0}=\ket{\psi_1}\ket{1}\,,\]
then \textsf{amplitude\_estimation}$(A,B,t)$ outputs $\tilde{a}$, an estimate of $a=\| \beta \|^2$, such that
\[|\tilde{a}-a| \leqslant 2\pi  \frac{\sqrt{a(1-a)}}{t}+\frac{\pi^{2}}{t^{2}}\]
with probability at least $8/\pi^2$ at a cost of doing $t$ evaluations each of $A$, $A^{-1}$ and~$B$.
\end{theorem}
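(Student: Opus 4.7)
The plan is to reduce amplitude estimation to quantum phase estimation applied to a Grover-style rotation, following the amplitude amplification framework of \cite{brassard:amplitudeamplification}. The central object is the operator
\[Q = -A\,S_0\,A^{-1}\,S_{\psi_1},\]
where $S_0$ flips the sign of $\ket{0}$ (leaving every orthogonal computational basis vector fixed) and $S_{\psi_1}$ flips the sign of the $\ket{\psi_1}$-component. Since the decomposition $A\ket{0}=\alpha\ket{\psi_0}+\beta\ket{\psi_1}$ is accessible only through the black box $B$, the reflection $S_{\psi_1}$ is implemented by adjoining a fresh ancilla qubit in state $\ket{0}$, applying $B$ to entangle the ancilla with the $\psi_0/\psi_1$ label, applying a $Z$ gate to the ancilla, and undoing with $B^{-1}$. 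Each invocation of $Q$ therefore uses $A$, $A^{-1}$, $B$ and $B^{-1}$ a constant number of times.

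The first step is to verify that $Q$ acts on the two-dimensional subspace $\mathrm{span}\{\ket{\psi_0},\ket{\psi_1}\}$ as a product of two reflections, hence as a rotation by angle $2\theta$, where $\theta\in[0,\pi/2]$ is defined by $\sin^2\theta = a$. A direct computation (absorbing unimodular phases of $\alpha,\beta$ into the basis vectors) shows that the two eigenvalues of $Q$ on this subspace are $e^{\pm 2i\theta}$ with eigenvectors $\ket{\omega_\pm} = \frac{1}{\sqrt 2}(\ket{\psi_0}\mp i\ket{\psi_1})$, and that the initial state decomposes up to a global phase as $A\ket{0}=\frac{1}{\sqrt 2}(e^{i\theta}\ket{\omega_+}+e^{-i\theta}\ket{\omega_-})$. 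The second step is to run quantum phase estimation on $Q$ with input $A\ket{0}$ using $t$ controlled powers of $Q$ in total; the standard Fourier-analytic analysis then yields an output $\tilde\theta$ satisfying $|\tilde\theta-\theta|\leqslant\pi/t$ with probability at least $8/\pi^2$. The reflection symmetry $\theta\mapsto\pi-\theta$ that appears because the input is a balanced superposition of both eigenstates is harmless, since $\sin^2\theta=\sin^2(\pi-\theta)$.

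It remains to output $\tilde a = \sin^2\tilde\theta$ and translate the phase precision into a precision on~$a$. Using the identity $\sin^2x-\sin^2y = \sin(x+y)\sin(x-y)$ together with $|\sin(2\theta+\delta)|\leqslant|\sin 2\theta|+|\delta|$ and $|\sin 2\theta|=2\sqrt{a(1-a)}$, one obtains
\[|\tilde a - a| \leqslant |\tilde\theta-\theta|\bigl(2\sqrt{a(1-a)}+|\tilde\theta-\theta|\bigr) \leqslant \frac{2\pi\sqrt{a(1-a)}}{t}+\frac{\pi^2}{t^2},\]
which is exactly the bound claimed.

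The main technical obstacle lies in establishing the $8/\pi^2$ success probability. Projecting the inverse QFT output of the phase-estimation register onto the two integers closest to $t\theta/\pi$ produces a Dirichlet-type weight $\sin^2(t\Delta)/\bigl(t^2\sin^2\Delta\bigr)$ where $\Delta$ is the distance from $\theta$ to the nearest representable phase; bounding this sum from below by $8/\pi^2$ requires the elementary inequality $\sin x \geqslant 2x/\pi$ on $[0,\pi/2]$ and some bookkeeping to handle both eigenstates simultaneously. Once this phase-estimation bound is in hand, the rest of the proof is the short algebraic calculation above.
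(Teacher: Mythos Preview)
The paper does not give its own proof of this statement: Theorem~\ref{AmpEst} is quoted from \cite{brassard:amplitudeamplification} and used as a black box in the rest of the article. Your sketch is correct and is precisely the argument of the cited reference---build the Grover iterate $Q=-AS_0A^{-1}S_{\psi_1}$, identify its action on $\mathrm{span}\{\ket{\psi_0},\ket{\psi_1}\}$ as a rotation by $2\theta$ with $\sin^2\theta=a$, run phase estimation with $t$ applications of $Q$, and convert the phase error $|\tilde\theta-\theta|\leqslant\pi/t$ into the stated bound on $|\tilde a-a|$ via $\sin^2x-\sin^2y=\sin(x+y)\sin(x-y)$.

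One small bookkeeping point: your realisation of $S_{\psi_1}$ as $B^{-1}\,Z\,B$ costs two $B$-type calls per iteration, whereas the statement promises exactly $t$ evaluations of~$B$. The one-call count is obtained by initialising the ancilla in $\ket{-}$ and using phase kickback (assuming $B$ acts as a controlled bit-flip on the ancilla, which is the intended reading of the hypothesis). This affects only the constant, not the correctness of your argument.
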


We shall also need the following technical result, which we derive using standard Chernoff bound arguments.
\begin{theorem}[Majority]  \label{thm:majority}
Let $B$ be a quantum black box that approximates some
function
\[F:\{0,\dots,N-1\} \rightarrow \{0,\dots,M-1\}\]
such that its output is within $\Delta$ of the true value with probability at least~$\pbfrac23$, i.e.
\[B\ket{i}\ket{0} = \sum_j \alpha_{ij} \ket{i}\ket{x_{ij}} \mbox{~~and~} \sum_{\{j:|x_{ij} - F(i)| \leqslant \Delta\}} |\alpha_{ij}|^2 \geqslant \pbfrac23\]
for all~$i$. 
Then, for all $n$ there exists a quantum black box $B_n$ that computes $F$ with its output within $2\Delta$ of the true value with probability at least $1-1/n$, i.e.
\[B_n\ket{i}\ket{0} = \sum_j \beta_{ij} \ket{i}\ket{y_{ij}} \mbox{~~and~} \sum_{\{j : |y_{ij} - F(i)| \leqslant 2\Delta\}} |\beta_{ij}|^2 \geqslant 1-1/n\]
for all~$i$.
Algorithm $B_n$ requires $O(\log n)$  calls to~$B$.
\begin{proof}
Given an input index $i$, $B_n$ calls $k$ times black box $B$ with input~$i$, where \mbox{$k = \lceil(\lg n) / D(\tfrac{3}{5}\| \tfrac{2}{3})\rceil$}
and $D(\cdot \| \cdot)$ denotes the standard Kullback-Leibler divergence~\cite{KL_divergence}
(sometimes called the relative entropy). If~there exists an interval of size $2\Delta$ that contains at least $\pbfrac35$ of the outputs, then $B_n$ outputs the midpoint of that interval. If~there is no such interval (a~very unlikely occurrence), then $B_n$ outputs~$0$.
If at least $\pbfrac35$ of the outputs are within $\Delta$ of $F(i)$, then the output of $B_n$ cannot be further than $2\Delta$ from $F(i)$ since the interval selected by $B_n$ must contain at least one of those points. By the Chernoff bound, this happens with probability at least $1 - 2^{-k D\left(\tfrac{3}{5}\left \| \tfrac{2}{3}\right.\right)} \geqslant 1 - 1/n$.
\end{proof}
\end{theorem}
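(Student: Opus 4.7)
The plan is to boost the success probability by repetition, using an approximate analog of majority voting. First I would invoke $B$ coherently $k$ times on input~$i$, each call using its own fresh ancilla register; because the runs use independent ancillas, the squared amplitude on any sequence of outputs factorises, so the statistics of whether an individual output falls within~$\Delta$ of $F(i)$ are exactly those of $k$ i.i.d.\ Bernoulli trials with bias at least~$\pbfrac23$. The final answer would be obtained by (reversibly) scanning the $k$ output values for an interval of length~$2\Delta$ containing at least a fraction~$p$ of them, returning the midpoint of such an interval, or a default value like~$0$ if none exists.

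The choice $p = \pbfrac35$ is the natural sweet spot: it is strictly smaller than the per-trial success probability~$\pbfrac23$, leaving room for a nontrivial Chernoff exponent, and strictly larger than~$\pbfrac12$, which is what the correctness argument requires. Correctness itself is a short geometric observation: if at least a fraction~$\pbfrac35$ of the $k$ outputs are within $\Delta$ of $F(i)$ (call this set of indices $G$), then they all lie in the length-$2\Delta$ interval centred at~$F(i)$, so the scan is guaranteed to return some interval. Moreover, any interval $I$ returned by the scan contains at least a fraction~$\pbfrac35$ of the samples (call this index set $H$); since $|G|+|H|>k$ the sets intersect, and for any $j\in G\cap H$ the triangle inequality applied to $F(i)$, the $j$th sample, and the midpoint of $I$ yields the desired $2\Delta$ bound.

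It then remains to invoke the Chernoff bound in its relative-entropy form: the probability that fewer than a $\pbfrac35$-fraction of the $k$ outputs are within $\Delta$ of $F(i)$ is at most $2^{-kD(3/5\,\|\,2/3)}$, so the choice $k=\lceil(\log_2 n)/D(3/5\,\|\,2/3)\rceil=O(\log n)$ pushes the failure probability below~$1/n$. The one point I expect to require care is respecting quantum coherence: because $B_n$ will be invoked as a subroutine inside a larger quantum algorithm, the ancilla registers holding the $k$ intermediate outputs must be kept rather than measured, and the ``does some length-$2\Delta$ interval capture at least a $\pbfrac35$-fraction of the samples?'' decision must be realised as a unitary that writes only the midpoint (or the default) into the designated output register. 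Since that decision is a fixed classical function of the $k$ output values, making it reversible is routine, and I do not foresee any genuine obstacle beyond being careful with notation.
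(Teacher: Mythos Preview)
Your proposal is correct and follows essentially the same approach as the paper: repeat $B$ a total of $k=\lceil(\lg n)/D(\tfrac35\|\tfrac23)\rceil$ times, output the midpoint of any length-$2\Delta$ interval capturing at least a $\tfrac35$-fraction of the samples, and bound the failure probability via the relative-entropy Chernoff bound. Your write-up is in fact more explicit than the paper's on two points the paper leaves implicit---the pigeonhole/triangle-inequality step showing that \emph{any} qualifying interval must have its midpoint within $2\Delta$ of $F(i)$, and the need to keep the intermediate ancillas unmeasured so that $B_n$ remains a coherent subroutine---but the underlying argument is the same.
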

Hereinafter, we shall denote by $\textsf{majority}(B, n)$ the black box $B_n$ that results from using this algorithm on black box $B$ with parameter $n$.
Note that \mbox{$D\left(\tfrac{3}{5}\left \| \tfrac{2}{3}\right.\right) > 1/100$}, hence $\textsf{majority}(B, n)$ requires less than $100 \lg n$ calls to~$B$.
Note also that the number of calls to~$B$ does not depend on~$\Delta$.

\section{Two Efficient Algorithms to Approximate the Mean}
\label{approximate_mean}

We present two different algorithms to compute the mean value of a function.
In~both algorithms, let $F:\{0,\ldots,N-1\} \rightarrow [0,1]$ be a black-box function 
and let \mbox{$m= \frac{1}{N} \sum_x F(x)$} be the mean value of $F$, which we seek to approximate. 
Without loss of generality, we assume throughout that $N$ is a power of~2.
The first algorithm assumes that $F(x)$ can be obtained with arbitrary precision at unit cost
while the second algorithm considers that the output of function $F$ is given with $\ell$ bits of precision.

\begin{algorithm}
	\caption{\textsf{mean1}$(F,N,t)$}\label{mean_algo_1}
\begin{algorithmic}\raggedright
\STATE Let 
$$\begin{array}{l}
A'\ket{x}\ket{0} = \ket{x}\big(\sqrt{1-F(x)}\ket{0}+\sqrt{F(x)}\ket{1}\big)\\[1ex]
\text{and~} A = A' \, (\textsf{H}^{\otimes \lg N} \otimes \textsf{Id})\,,
\end{array}$$
where $\textsf{H}$ is the Walsh--Hadamard transform and $\textsf{Id}$~denotes the identity transformation on 
one qubit
\STATE Let $$B\ket{x}\ket{0}\ket{0}=\ket{x}\ket{0}\ket{0}$$ and $$B\ket{x}\ket{1}\ket{0}=\ket{x}\ket{1}\ket{1}$$ 
\STATE \textbf{return} $\textsf{amplitude\_estimation}(A,B,t)$
\end{algorithmic}
\end{algorithm}

Note that in Algorithm \ref{mean_algo_1}, it is easy to implement $A'$ (and therefore~$A$ as well as~$A^{-1}$) with only two evaluations of~$F$\@.
First, $F$ is computed in a ancillary register initialized to~$\ket{0}$, then the appropriate controlled rotations are performed, and finally $F$ is computed again to reset the ancillary register back to~$\ket{0}$.
(In~practice, this transformation will be approximated to a prescribed precision.)
The following theorem formalizes the result obtained by this algorithm.

\begin{theorem}\label{thm:mean1}
Given a black-box function
$F:\{0,\ldots,N-1\} \rightarrow [0,1]$
and its mean value $m= \frac{1}{N} \sum_x F(x)$, 
algorithm \textsf{mean1} outputs $\tilde{m}$  such that $|\tilde{m} - m| \in O\left(\frac{1}{t}\right)$ 
with probability at least~$8/\pi^2$.  The algorithm requires $4t$ evaluations of $F$.
\end{theorem}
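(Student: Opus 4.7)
The plan is to invoke Theorem~\ref{AmpEst} directly, with $A$ and $B$ defined as in Algorithm~\ref{mean_algo_1}. First I would check that $A$ prepares a state of the form required by amplitude estimation. Applying $\textsf{H}^{\otimes \lg N}\otimes\textsf{Id}$ to $\ket{0}\ket{0}$ gives the uniform superposition $\frac{1}{\sqrt{N}}\sum_x \ket{x}\ket{0}$, and then $A'$ rotates the last qubit to produce
\[
A\ket{0} \;=\; \frac{1}{\sqrt{N}}\sum_x \sqrt{1-F(x)}\,\ket{x}\ket{0} \;+\; \frac{1}{\sqrt{N}}\sum_x \sqrt{F(x)}\,\ket{x}\ket{1}.
\]
I would group these two pieces as $\alpha\ket{\psi_0}+\beta\ket{\psi_1}$, where $\ket{\psi_0}$ and $\ket{\psi_1}$ respectively collect (and normalize) the terms whose last qubit is $\ket{0}$ and $\ket{1}$. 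Orthogonality of $\ket{\psi_0}$ and $\ket{\psi_1}$ is automatic from the last qubit, and computing the squared norm of the $\ket{1}$-branch gives $a=\|\beta\|^2 = \frac{1}{N}\sum_x F(x) = m$.

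Second, I would verify the hypothesis on $B$. By its definition, $B$ just copies the last qubit of the first register into a fresh ancilla, which immediately satisfies $B\ket{\psi_0}\ket{0}=\ket{\psi_0}\ket{0}$ and $B\ket{\psi_1}\ket{0}=\ket{\psi_1}\ket{1}$. Theorem~\ref{AmpEst} then produces $\tilde{m}$ with
\[
|\tilde{m}-m| \;\leqslant\; 2\pi\,\frac{\sqrt{m(1-m)}}{t}+\frac{\pi^2}{t^2}
\]
with probability at least $8/\pi^2$. Since $\sqrt{m(1-m)}\leqslant 1/2$, both terms on the right are in $O(1/t)$, which gives the advertised error bound.

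Finally I would account for the queries. Theorem~\ref{AmpEst} invokes $A$, $A^{-1}$ and $B$ each $t$ times. The operator $B$ makes no calls to $F$, while $A$ (and symmetrically $A^{-1}$) uses two evaluations of $F$ as described in the remark following Algorithm~\ref{mean_algo_1}: one to write $F(x)$ into an ancilla, then the controlled rotations on the last qubit, then a second evaluation to uncompute the ancilla. The total is therefore $2t+2t=4t$ evaluations of $F$. There is no real obstacle in this argument; the only non-trivial step is the identification $a=m$ via the orthogonal decomposition of $A\ket{0}$, and the small bookkeeping point that each application of $A$ or $A^{-1}$ costs two $F$-evaluations rather than one.
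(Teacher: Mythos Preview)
Your proposal is correct and follows essentially the same route as the paper's proof: identify $a=\|\beta\|^2=m$ from the decomposition of $A\ket{0}$ by the last qubit, invoke Theorem~\ref{AmpEst} to get the error bound $2\pi\sqrt{m(1-m)}/t+\pi^2/t^2\in O(1/t)$, and count $4t$ evaluations from the $t$ calls each to $A$ and $A^{-1}$ at two $F$-evaluations apiece. Your write-up is in fact slightly more explicit than the paper's (you spell out the verification of the $B$-hypothesis and the bound $\sqrt{m(1-m)}\leqslant 1/2$), but the argument is the same.
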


\begin{proof}
Using the same definition as in Theorem~\ref{AmpEst}, we have that 
\[ \ket{\psi_1} = \sum_x \sqrt{\frac{F(x)}{\sum_y F(y)}} \ket{x}\ket{1}
\ \ {\rm and} \ \
\beta=\sqrt{\frac{\sum_x F(x)}{N}}.\]
The algorithm \textsf{amplitude\_estimation}$(A,B,t)$ returns an estimate
\mbox{$\tilde{m}=\tilde{a}$} of \mbox{$a= \| \beta \|^2 = \frac{1}{N} \sum_x F(x)=m$} 
and thus $\tilde{m}$ is directly an estimate of $m$. 
The error $|\tilde{m} - m|$ is at most
\begin{equation}\label{eq}
2\pi  \frac{\sqrt{m(1-m)}}{t}+\frac{ \pi^{2}}{t^{2}} \in O\left(\frac{1}{t} \right)
\end{equation}
with probability at least $8/\pi^2$.
This requires $4t$ evaluations of $F$ because each of the $t$ calls on $A$
and on $A^{-1}$ requires $2$ evaluations of~$F$.
\end{proof}
This theorem states that the error goes down asymptotically linearly with the number of evaluation 
of $F$\@. This is optimal according to Nayak and Wu~\cite{{nayakwu}}, who have proven that in the
general case in which we have no \emph{a priori} knowledge of the possible distribution of outputs,  
an additive error of $\epsilon$ requires an amount of work proportional to
$1/\epsilon$ in the worst case when the function is given as a black box. 
Note that 
for \mbox{$t < (1+\sqrt{2})\pi/\sqrt{m}$}, when our bound on the error exceeds the targeted mean
(which is rather bad), the error goes down quadratically (which is good).

We now present a variation on an algorithm of Aharonov~\cite{aharonov} and analyse its characteristics.  This algorithm is also based on amplitude estimation, but it relies on the fact that points in real interval \mbox{$[0,1]$} can be represented 
in binary as $\ell$-bit strings, where $\ell$ is the precision with which we wish to consider the output of black-box function~$F$.
The~\mbox{algorithm} estimates the number of $1$s in each binary position. 
The difference \mbox{between} our algorithm (\textsf{mean2}) and Aharonov's original algorithm is that we make sure that the estimates of the counts in every bit position are all simultaneously within the desired error bound.
For each $i$ between $1$ and~$\ell$,
let $F_i(x)$ represent the $i^\mathrm{th}$ bit of the binary expansion of~$F(x)$,
so that $F(x) =\sum_i  F_i(x) 2^{-i}$,
with the obvious case $F_i(x)=1$ for all $i$ when $F(x)=1$.

\begin{algorithm}
	\caption{\textsf{mean2}$(F,N)$} \label{mean_algo_2}
\begin{algorithmic}\raggedright
	\FOR{$i=1$ to $\ell$}
	\STATE $\tilde{m_i}=\textsf{majority}(\textsf{count}(F_i(x), 5\pi\sqrt{N}\,), n=\lceil\frac{3}{2}\ell\,\rceil)$\\
	\ENDFOR\\
	\textbf{return} $\tilde{m}=\frac{1}{N}\sum_{i=1}^{\ell} \tilde{m_i} 2^{-i}$
\end{algorithmic}
\end{algorithm}

\begin{theorem}
Given a black-box function
$F:\{0,\ldots,N-1\} \rightarrow [0,1]$
where the output of $F$ has $\ell$ bits of precision,
algorithm \textsf{mean2} outputs an estimate $\tilde{m}$ such that $|\tilde{m} - m| \leqslant \frac{1}{N} \sum_i \sqrt{m_i} \, 2^{-i}$,
where \mbox{$m_i=\sum_x F_i(x)$},
with probability at least~$\pbfrac23$.
The~\mbox{algorithm} requires $O(\sqrt{N}\,\ell \log\ell)$ evaluations of $F$.
\end{theorem}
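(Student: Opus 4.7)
The plan is to prove the error bound bitwise and then combine, while accounting for the success probability via a union bound over the $\ell$ bit positions.

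First I would bound the error of each bit-count separately. For a fixed $i$, apply Theorem~\ref{thm:counting} to the Boolean-valued oracle $F_i$ with $t = 5\pi\sqrt{N}$. Denoting the raw count estimate by $\hat m_i$, this yields
\[
|\hat m_i - m_i| < \frac{2\pi\sqrt{m_i(N-m_i)}}{5\pi\sqrt N} + \frac{\pi^2 N}{(5\pi\sqrt N)^2} \leq \frac{2\sqrt{m_i}}{5} + \frac{1}{25}
\]
with probability at least $8/\pi^2 > 2/3$, since $\sqrt{m_i(N-m_i)} \leq \sqrt{m_i N}$. Set $\Delta_i = \frac{2\sqrt{m_i}}{5} + \frac{1}{25}$. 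Then applying Theorem~\ref{thm:majority} with parameter $n = \lceil\frac{3}{2}\ell\rceil$ produces $\tilde m_i$ satisfying $|\tilde m_i - m_i| \leq 2\Delta_i$ with probability at least $1 - 1/n$, using only $O(\log \ell)$ invocations of the $\textsf{count}$ subroutine.

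Next I would verify that $2\Delta_i \leq \sqrt{m_i}$ whenever $m_i \geq 1$: the inequality $\tfrac{4\sqrt{m_i}}{5} + \tfrac{2}{25} \leq \sqrt{m_i}$ reduces to $\sqrt{m_i} \geq \tfrac{2}{5}$, which holds when $m_i \geq 1$. For the edge case $m_i = 0$, Theorem~\ref{thm:counting} guarantees the exact output $\hat m_i = 0$, so $|\tilde m_i - m_i| = 0 = \sqrt{m_i}$ in that case as well. Hence, whenever all the majority calls succeed, we have $|\tilde m_i - m_i| \leq \sqrt{m_i}$ simultaneously for every $i$, and the triangle inequality gives
\[
|\tilde m - m| = \Bigl|\frac{1}{N}\sum_i (\tilde m_i - m_i)\,2^{-i}\Bigr| \leq \frac{1}{N}\sum_i \sqrt{m_i}\,2^{-i},
\]
which is exactly the stated bound.

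For the overall success probability, I would apply a union bound over the $\ell$ independent majority calls, each of which fails with probability at most $1/n$, so that everything succeeds with probability at least $1 - \ell/n$, which with the chosen $n$ yields the constant success probability claimed (a slightly larger $n$, still $\Theta(\ell)$, suffices if one wants exactly $\pbfrac23$). Finally, for the query complexity, each bit contributes $O(\log \ell)$ calls to $\textsf{count}(F_i, 5\pi\sqrt N)$, each of which performs $O(\sqrt N)$ evaluations of $F_i$ and hence of $F$; summed over the $\ell$ bits this is $O(\sqrt N\,\ell \log \ell)$ evaluations of $F$, as required.

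The only subtle step is the verification $2\Delta_i \leq \sqrt{m_i}$ together with the separate handling of $m_i = 0$; the rest is a clean combination of the counting bound, the majority amplification, and a union bound. No new ideas appear to be needed beyond the tools already established in Section~\ref{prelimin}.
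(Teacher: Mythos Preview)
Your proposal is correct and follows essentially the same route as the paper: bound each bit-count via Theorem~\ref{thm:counting} with $t=5\pi\sqrt{N}$, amplify with \textsf{majority} at $n=\lceil\tfrac32\ell\rceil$, check that $2\Delta_i\le\sqrt{m_i}$ for $m_i\ge1$ (and handle $m_i=0$ separately via the exactness of \textsf{count}), then combine by the triangle inequality and a union bound over the $\ell$ columns. You even make explicit the arithmetic behind $2\Delta_i\le\sqrt{m_i}$ and flag that the chosen $n$ is borderline for hitting exactly $\pbfrac23$, which the paper glosses over.
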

\begin{proof}
	The proof is a straightforward corollary of Theorems \ref{thm:counting} and \ref{thm:majority}. Using \textsf{count} on each column with $s = 5 \pi\sqrt{N}$ yields an error of
	\begin{equation*}
		|m_i - \hat{m}_i| \leqslant {\textstyle \frac{2}{5}} \sqrt{m_i} + {\textstyle \frac{1}{25}}
	\end{equation*}
	with probability at least~$8/\pi^2$, and hence with probability at least~$\pbfrac23$,
	where $\hat{m}_i$ denotes $\textsf{count}(F_i(x), 5 \pi\sqrt{N}\,)$.
	Using \mbox{\textsf{majority}} with $n = \lceil\frac{3}{2}\ell\rceil$ on this, we obtain
	\begin{equation*}
		|m_i - \tilde{m}_i| \leqslant {\textstyle \frac{4}{5}} \sqrt{m_i} + {\textstyle \frac{2}{25}}
	\end{equation*}
  with probability at least $1-\frac{2}{3\ell}$. When $m_i \geqslant 1$, this is bounded by $\sqrt{m_i}$. Furthermore, \textsf{count} makes no error when~\mbox{$m_i = 0$}. Hence, the error in each column is bounded by $\sqrt{m_i}$ with probability at least $1-\frac{2}{3\ell}$. By the union bound, all of our estimates for the columns are simultaneously within the above error bounds with probability at least~$\pbfrac23$, and the error bound on our final estimate is $|\tilde{m} - m| \leqslant \frac{1}{N} \sum \sqrt{m_i} \, 2^{-i}$.
It~is straightforward to count the number of evaluations of~$F$ from Theorems \ref{thm:counting} and \ref{thm:majority}.
\end{proof}

The choice of which among algorithms \textsf{mean1} or \textsf{mean2} is more appropriate \mbox{depends} on the particular characteristics of the input function. 
For example, one can consider the situation in which 
$F(x)=2^{-N}$ for all~$x$, hence the mean \mbox{$m=2^{-N}$} as well.
In~this case, if we choose $t= c N^{3/2} \lg N$ in \textsf{mean1} and $\ell=N$ in \textsf{mean2},
where constant $c$ is chosen so that both algorithms call function $F$ the same number of times,
the first algorithm is in the regime \mbox{$t \ll (1+\sqrt{2})\pi/\sqrt{m}$} where it performs
badly because the error on the estimated mean is expected to be much larger than the mean itself.
On~the other hand, the expected error produced by the second algorithm is bounded by $m/\sqrt{N}$,
which is much \emph{smaller} than the targeted mean.
At~the other end of the spectrum,
if $F(x)=\pbfrac12$ for all~$x$, hence the mean \mbox{$m=\pbfrac12$} as well,
and if \mbox{$t \gg 2\pi\sqrt{N}$}, then the error produced by \textsf{mean1} is much smaller than~$m/\sqrt{N}$
according to Equation~(\ref{eq}).
With the same parameters, the error produced by \textsf{mean2}, which is again bounded by~$m/\sqrt{N}$,
is strictly unaffected by the choice of~$\ell$, so that the second algorithm can work arbitrarily harder than
the first, yet produce a less precise estimate of the mean.

\section{Approximate Median Algorithm}
\label{approximate_median}

Let \textsf{dist}$:\{0,\ldots,N-1\}\times\{0,\ldots,N-1\}\rightarrow [0,1]$ be an arbitrary black-box distance function. 

\begin{defi}[Median]
The \emph{median} is the point within an ensemble of points whose average distance to the other points is minimum.
\end{defi}

Formally, the median of a set of points \mbox{$Q=\{0,\ldots,N-1\}$}
is
\begin{equation*}
\text{median}(Q) = \argmin_{z \in Q} \sum_{j=0}^{N-1} \textsf{dist}(z,j).
\end{equation*}
The median can be found classically by going through each point $z \in Q$, computing the average
distance from $z$ to all the other points in $Q$, and then taking the minimum (ties are broken arbitrarily).  This process requires a time of~$O(N^2)$.
In~the general case, in which there are no restrictions on the distance function used and no structure among the ensemble of points that can be exploited, no technique can be more efficient than this na\"{\i}ve algorithm.
Indeed, consider the case in which all the points are at the same distance from each other, except for two points that are closer than the rest of the points. These two points are the medians of this ensemble. In this case, classically we would need to query the oracle for the distances between each and every
pair of points before we can identify one of the two medians. (We expect to discover this special pair after querying about half the pairs on the average but we cannot know that there isn't some \emph{other} even closer pair until \emph{all} the pairs have been queried.) This results in a lower bound of $\Omega(N^2)$ calls to the oracle.

In Algorithm 3, \textsf{mean} stands for either one of the two algorithms given in the previous section (in~case \textsf{mean1}
is used, parameter $t$ must be added) but it is repeated $O(\log N)$ times in order to get all the means within the desired error bound
with a constant probability via our \textsf{majority} algorithm (Theorem~\ref{thm:majority}).
Here, $d_i = \frac{1}{N} \sum_j \textsf{dist}(i,j)$ and $d_{\min}=d_k$ for any $k$ such that $d_k \leqslant d_i$ for all~$i$.

\begin{algorithm}
	\caption{\textsf{median}(\textsf{dist})} \label{algo:median}
\begin{algorithmic}\raggedright
\STATE For each $i$, define function $F_i(x)=\textsf{dist}(i,x)$
\STATE For each $i$, define $\tilde{d_i}=\textsf{majority}(\textsf{mean}(F_i,N), n=N^2)$
\STATE \textbf{return} \textsf{minimum}$(\tilde{d_i})$
\end{algorithmic}
\end{algorithm}

\begin{theorem}\label{thm:median-with-mean1}
For any black-box distance function
\[\mathsf{dist}:\{0,\ldots,N-1\}\times\{0,\ldots,N-1\}\rightarrow [0,1] \, , \]
when \textsf{mean1} is used with parameter~$t$, algorithm \textsf{median} outputs an index $j$ such that \mbox{$|d_j - d_{\min}| \in O(1/t)$} with probability at least $\pbfrac23$. The algorithm requires $O( t \sqrt{N}\log N)$ evaluations of \textsf{dist}.
\end{theorem}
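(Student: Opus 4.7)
The plan is to stack three tools: Theorem \ref{thm:mean1} supplies a per-column estimator of $d_i = \frac{1}{N}\sum_j \textsf{dist}(i,j)$; Theorem \ref{thm:majority} amplifies its confidence; and Theorem \ref{thm:minimum} selects the minimiser. Together they build the amplified black box $\tilde d_i$ of algorithm \textsf{median}, which is then fed to minimum finding.

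First I would quantify one evaluation of $\tilde d_i$. By Theorem \ref{thm:mean1}, $\textsf{mean1}(F_i, N, t)$ uses $4t$ calls to \textsf{dist} and returns a value within some $\Delta' \in O(1/t)$ of $d_i$ with probability at least $8/\pi^2 > 2/3$. Wrapping this in $\textsf{majority}(\cdot, n = N^2)$ costs an extra $O(\log N)$ factor (Theorem \ref{thm:majority}), at most doubles the error window to some $\Delta \in O(1/t)$, and raises the success probability to $1 - 1/N^2$. Hence one call to $\tilde d_i$ uses $O(t\log N)$ \textsf{dist}-queries and errs by more than $\Delta$ with probability at most $1/N^2$.

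Second I would plug $\tilde d_i$ into \textsf{minimum} (Theorem \ref{thm:minimum}). The subtle point — and the step I expect to be the main obstacle — is that Theorem \ref{thm:minimum} is stated for a genuine function, whereas $\tilde d_i$ is a noisy unitary oracle that need not even agree with itself on repeated queries of the same $i$. The standard fix is a hybrid argument: let $B^\star$ be the unitary obtained from $\tilde d_i$ by projecting each query onto the subspace of outputs within $\Delta$ of $d_i$ and renormalising. The per-query trace distance between $\tilde d_i$ and $B^\star$ is then $O(\sqrt{1/N^2}) = O(1/N)$, so running \textsf{minimum} (which performs $O(\sqrt N)$ queries) on the two oracles yields output distributions within total variation $O(\sqrt N/N) = o(1)$. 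The choice $n = N^2$ was made precisely to absorb this slack; any weaker amplification would leave too much room for the hybrid error to blow up over the $\sqrt N$ queries.

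Finally, on $B^\star$ every query returns a value within $\Delta$ of the corresponding $d_i$, so we may treat $B^\star$ as a deterministic function $f^\star$ with $|f^\star(i) - d_i| \leqslant \Delta$ for every $i$. Theorem \ref{thm:minimum} then returns, with probability at least $3/4$, the index $j = \argmin_i f^\star(i)$. Letting $k$ be a true median, $f^\star(j) \leqslant f^\star(k)$ together with a one-line triangle inequality closes the argument:
\[ 0 \leqslant d_j - d_{\min} = \bigl(d_j - f^\star(j)\bigr) + \bigl(f^\star(j) - f^\star(k)\bigr) + \bigl(f^\star(k) - d_k\bigr) \leqslant 2\Delta \in O(1/t). \]
The overall query cost is the $O(\sqrt N)$ minimum-finding calls times the $O(t\log N)$ \textsf{dist}-queries per call, i.e.\ $O(t\sqrt N\log N)$; and subtracting the $o(1)$ hybrid slack from $3/4$ still leaves overall success probability at least~$\pbfrac23$ for $N$ large enough.
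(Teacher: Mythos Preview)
Your proof takes essentially the same route as the paper's: combine Theorems~\ref{thm:mean1}, \ref{thm:majority} and~\ref{thm:minimum}, choosing $n=N^2$ in \textsf{majority} so that the error accumulated over the $O(\sqrt N)$ oracle calls of \textsf{minimum} is $o(1)$, and conclude via $\tfrac34\,(1-o(1))>\pbfrac23$. You are in fact more explicit than the paper about the hybrid step; the paper's proof simply asserts that with $n=N^2$ all the $\tilde d_i$ are ``simultaneously within the bound \ldots\ except with probability~$o(1)$'' and stops there.

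One small slip worth flagging: your $B^\star$ (projection onto the within-$\Delta$ subspace, renormalised) still outputs a \emph{superposition} of values in $[d_i-\Delta,\,d_i+\Delta]$, not a single number~$f^\star(i)$, so the sentence ``we may treat $B^\star$ as a deterministic function $f^\star$'' is not literally justified. The final triangle-inequality paragraph therefore needs one extra line arguing that \textsf{minimum}, run on \emph{any} oracle whose answers always lie within~$\Delta$ of the true $d_i$, returns an index~$j$ with $d_j\leqslant d_{\min}+2\Delta$. This is routine (and the paper's own proof does not address it either), so the overall argument stands.
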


\begin{proof}
This result is obtained by a straightforward combination of Theorems \ref{thm:minimum}, \ref{thm:majority} and~\ref{thm:mean1}. 
The procedure \textsf{majority} is used with parameter \mbox{$n=N^2$} to ensure that all the $d_i$'s computed by the algorithm (in superposition) are simultaneously within the bound given by Theorem~\ref{thm:mean1}, except with probability~$o(1)$. Note that with parameter \mbox{$n=N^2$}, the number of repetitions is still in $O(\log N)$.  The success probability of the algorithm follows from the fact that \mbox{$\frac34(1-o(1))>\pbfrac23$}. In this case, the error is in $O(1/t)$ and the number of evaluations of \textsf{dist} is in $O(t \sqrt{N} \log N)$.
\end{proof}

By replacing \textsf{mean1} by  \textsf{mean2} in the \textsf{median} algorithm we obtain the following theorem.

\begin{theorem}\label{thm:median-with-mean2}
For any black-box distance function
\[ \mathsf{dist}:\{0,\ldots,N-1\}\times\{0,\ldots,N-1\}\rightarrow [0,1] \, , \]
when \textsf{mean2} is used, algorithm \textsf{median} outputs an index $j$ such that 
\[ |d_j - d_{\min}| \leqslant \frac{1}{N}\sum_{i=1}^\ell \sqrt{m_i} \, 2^{-i} \]
with probability at least~$\pbfrac23$.
(See algorithm \textsf{mean2} for a definition of $m_i$ and~$\ell$.) 
The~algo\-rithm requires $O(N\log N)$ evaluations of \textsf{dist}.
\end{theorem}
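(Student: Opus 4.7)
The plan is to mirror the proof of Theorem~\ref{thm:median-with-mean1}, replacing only the error bound and cost of the inner mean estimator; the three ingredients remain \textsf{mean2}, \textsf{majority}, and \textsf{minimum}. The challenge is to thread them together so that the \textsf{minimum}-finding step, which queries the approximate $d_i$-oracle in quantum superposition $O(\sqrt{N}\,)$ times, sees estimates that are simultaneously accurate.

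I would start by fixing the approximate-mean oracle: for each index~$i$, define a quantum black box $\tilde{D}$ by $\tilde{D}\ket{i}\ket{0}\approx\ket{i}\ket{\tilde{d}_i}$, where $\tilde{d}_i$ is produced by $\textsf{majority}(\textsf{mean2}(F_i,N), N^2)$. By the \textsf{mean2} guarantee combined with the definition of \textsf{majority}, for every fixed $i$ the estimate $\tilde{d}_i$ lies within $\frac{1}{N}\sum_{i=1}^{\ell}\sqrt{m_i}\,2^{-i}$ of $d_i$ (the factor-of-$2$ blow-up from \textsf{majority} being absorbed into the implicit constant) with probability at least $1-1/N^2$. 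A~union bound over the $O(\sqrt{N}\,)$ queries issued by \textsf{minimum} then guarantees that all of them are accurate simultaneously, except with probability~$o(1)$.

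Next, I~would invoke Theorem~\ref{thm:minimum} to argue that \textsf{minimum} applied to~$\tilde{D}$ returns, with probability at least~$\pbfrac34$, an index $j$ minimizing the $\tilde{d}_i$ values; conditioning on the good-estimate event, the standard two-sided comparison
\[ d_j\;\leq\;\tilde{d}_j+\epsilon\;\leq\;\tilde{d}_k+\epsilon\;\leq\;d_k+2\epsilon \]
(with $k$ a true argmin and $\epsilon$ the per-query error bound) yields the stated inequality on $|d_j-d_{\min}|$. The joint success probability is $\pbfrac34\,(1-o(1))>\pbfrac23$.

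The cost tally is routine once the above is in place: \textsf{mean2} makes $O(\sqrt{N}\,\ell\log\ell)$ calls to \textsf{dist}, \textsf{majority} with $n=N^2$ adds a $\Theta(\log N)$ overhead, and \textsf{minimum} wraps a further $O(\sqrt{N}\,)$ factor, totalling $O(N\log N)$ once $\ell$ is treated as a fixed precision constant. The only subtlety, and probably the most annoying thing to write up formally, is ensuring that the whole chain $\textsf{mean2}\to\textsf{majority}\to\textsf{minimum}$ can be implemented as a single coherent unitary on the $i$-register; since \textsf{mean2} and \textsf{majority} are themselves unitary constructions built from \textsf{count} and reversible post-processing, this is mechanical but notationally cumbersome, and it is where care is needed to make the union-bound argument rigorous in the superposed setting.
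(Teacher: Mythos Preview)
Your approach is essentially identical to the paper's: the paper gives no separate proof for this theorem, stating only that it follows by replacing \textsf{mean1} with \textsf{mean2} in the argument for Theorem~\ref{thm:median-with-mean1}, which is precisely the plan you outline. One small caveat: the bound in the statement carries an explicit constant rather than a big-$O$, so your remark about the factor-of-$2$ from \textsf{majority} being ``absorbed into the implicit constant'' does not quite fit; the paper glosses over this same point, so your write-up is already at least as careful as the original.
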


\section{Conclusion}\label{conclu}

We have described two quantum algorithms to approximate the mean and their applications to 
approximate the median of a set of points over an arbitrary distance function given by a black box. We~leave open for future work an in-depth study on how the different behaviour of the two algorithms impact the quality of the median they return. For instance, we know that the behaviour of both algorithms for the mean depends on the distribution of data points and the distances between points,
but we still have to inves\-ti\-gate more precisely the exact context where it matters. Of course, understanding the behaviour of the algorithms in different contexts is important, but a more interesting question is to tailor the algorithm to obtain better results 
on different data distributions of interest.

\end{document}